\def\BibTeX{{\rm B\kern-.05em{\sc i\kern-.025em b}\kern-.08em
		T\kern-.1667em\lower.7ex\hbox{E}\kern-.125emX}}
\DeclareMathOperator*{\argmax}{arg\,max}
\newtheorem{proposition}{Proposition}
\newtheorem{remark}{Remark}
\newtheorem{lemma}{Lemma}
\begin{document}
	
\title{Optimal Water-Filling Algorithm in Downlink Multi-Cluster NOMA Systems\\
	%\thanks{Identify applicable funding agency here. If none, delete this.}
}

%\author{\IEEEauthorblockN{Sepehr Rezvani}
%\IEEEauthorblockA{\textit{Institute for Communications Technology} \\
%\textit{Technische Universit\"{a}t Braunschweig}\\
%Braunschweig, Germany \\
%https://orcid.org/0000-0001-5185-3378}
%\and
%\IEEEauthorblockN{Eduard A. Jorswieck}
%\IEEEauthorblockA{\textit{Institute for Communications Technology} \\
%\textit{Technische Universit\"{a}t Braunschweig}\\
%Braunschweig, Germany \\
%https://orcid.org/0000-0001-7893-8435}
%\and
%\IEEEauthorblockN{Nader Mokari}
%\IEEEauthorblockA{\textit{Department of Electrical and
%Computer Engineering} \\
%\textit{Tarbiat Modares University}\\
%Tehran, Iran \\
%email: nader.mokari@modares.ac.ir.}
%\and
%\IEEEauthorblockN{Mohammad R. Javan}
%\IEEEauthorblockA{\textit{Department of Electrical and Robotics Engineering} \\
%\textit{Shahrood University of Technology}\\
%Shahrood, Iran \\
%email: javan@shahroodut.ac.ir}
%}
\author{
	\IEEEauthorblockN{Sepehr Rezvani\IEEEauthorrefmark{1}, and Eduard A. Jorswieck\IEEEauthorrefmark{1}}
	\IEEEauthorblockA{\IEEEauthorrefmark{1}Institute for Communications Technology, Technische Universität Braunschweig, Braunschweig, Germany}
}
%\author{\IEEEauthorblockN{Sepehr Rezvani and Eduard A. Jorswieck}
%	\IEEEauthorblockA{Institute for Communications Technology, Technische Universität Braunschweig, \\
%		Braunschweig 38106, Germany (e-mails: \{Rezvani,Jorswieck\}@ifn.ing.tu-bs.de).}}
	
\maketitle

\begin{abstract}
	 The key idea of power-domain non-orthogonal multiple access (NOMA) is to exploit the superposition coding (SC) combined with successive interference cancellation (SIC) technique (called SC-SIC) while reducing the receivers' complexity as well as error propagation. Actually, NOMA suggests a low-complexity scheme, where users are grouped into multiple clusters operating in isolated resource blocks, and SC-SIC is performed among users within each cluster. In this paper, we propose a globally optimal joint intra- and inter-cluster power allocation for any arbitrary user grouping to maximize users' sum-rate. In this algorithm, we exploit the closed-form of optimal intra-cluster power allocation obtained in our previous work. Then, by transforming network-NOMA to an equivalent virtual network-OMA, we show that the optimal power allocation can be obtained based on the very fast water-filling algorithm. Interestingly, we observe that each NOMA cluster acts as a virtual OMA user whose effective channel gain is obtained in closed form. Also, each virtual OMA user requires a minimum power to satisfy the minimum rate demand of its real multiplexed user. In simulation results, we evaluate the performance gap between fully SC-SIC, NOMA, and OMA, in terms of users sum-rate, and outage probability.
\end{abstract}	

\begin{IEEEkeywords}
	NOMA, multicarrier, water-filling, successive interference cancellation, optimal power allocation
\end{IEEEkeywords}

\section{Introduction}
\allowdisplaybreaks
\IEEEPARstart{T}{the} channel capacity of single-input single-output (SISO) Gaussian broadcast channels (BCs) can be achieved by superposition coding (SC) at the transmitter, and coherent multiuser detection algorithms, like successive interference cancellation (SIC) at the receivers \cite{PHDdiss,NIFbook}. Although fully SC-SIC, i.e., multiplexing all the receivers' signal, is capacity-achieving in SISO Gaussian BCs that are known to be degraded \cite{PHDdiss,NIFbook}, it results in high complexity at the receivers for performing SIC. Also, the error propagation limits the practical implementation of fully SC-SIC for larger number of receivers \cite{6692652,7263349}. 
To tackle the mentioned challenges, the work in \cite{6692652} proposed a low-complexity technique, called non-orthogonal multiple access (NOMA), which is a combination of SC-SIC and orthogonal multiple access (OMA) such as frequency-division multiple access (FDMA) or time-division multiple access (TDMA). In NOMA, the users are grouped into multiple isolated clusters, and SC-SIC is performed among users within each cluster. The clusters operate in isolated resource blocks, based on FDMA/TDMA, thus the inter-cluster interference is eliminated\footnote{The spatial domain multiple access (SDMA) can also be introduced on NOMA, where clusters are isolated by zero-forcing beamforming \cite{7263349}.}. Since each user belongs to only one NOMA cluster, each user occupies only one resource block. Accordingly, each resource block can be viewed as a SISO Gaussian BC whose capacity is achieved by SC-SIC. In doing so, NOMA is considered as one of the promising candidate solutions for the 5th and 6th generations wireless networks \cite{9154358}.

Although SISO Gaussian BCs are degraded, meaning that successful SIC among multiplexed users does not depend on power allocation \cite{8823873}, it is well-known that power allocation in NOMA is essential to achieve preferable performance \cite{7263349,9154358,8823873}. The optimal power allocation in NOMA includes two components: 1) optimal power allocation among multiplexed users within each cluster; 2) optimal power allocation among these isolated clusters (optimal power budget of clusters). In our previous work \cite{rezvani2021optimal}, we showed that the Karush–Kuhn–Tucker (KKT) optimality conditions analysis for fully SC-SIC, also called single-cluster NOMA, results in a unique closed-form of optimal powers for both the sum-rate maximization and power minimization problems. In this work, we consider the general NOMA system with multiple clusters each having any arbitrary number of users. To the best of our knowledge, the only work that addressed the optimal joint intra- and inter-cluster power allocation is \cite{7982784}. In \cite{7982784}, each group has only two users, and all the analysis is based on allocating more power to each weaker user to ensure successful SIC, which is not necessary (Myth 1 in \cite{8823873}). That is, the optimal power allocation in downlink (multi-cluster) NOMA for maximizing users sum-rate under the individual users minimum rate demand is still an open problem. 
Our main contributions are listed as follows:
\begin{itemize}
	\item We show that for any given power budget of clusters, only the cluster-head (strongest) users, which have the highest decoding order, deserve additional power. The rest of the users get power to only maintain their minimal rate demands.
	\item We show that $N$-cluster NOMA can be equivalently transformed to a $N$-virtual user OMA system. Each cluster is modeled as a virtual OMA user, whose effective channel gain is obtained in closed form. The minimum rate demand of users within each cluster is modeled as the minimum required power of the virtual user. Therefore, each virtual user requires a minimum power to satisfy the quality-of-service (QoS) demand of multiplexed users.
	\item After the transformation of NOMA to its equivalent virtual OMA system, we obtain the globally optimal power allocation by using a very fast water-filling algorithm. More importantly, this algorithm is valid for heterogeneous number of multiplexed users within each cluster, so it can be used for any arbitrary user grouping.
	\item We provide a conceptual performance comparison between three main schemes: fully SC-SIC, NOMA (with various maximum number of multiplexed users), and OMA. The performance comparison between NOMA and fully SC-SIC brings new insights on the suboptimality-level of NOMA due to user grouping based on FDMA/TDMA. In this work, we answer the question \textit{"How much performance gain can be achieved if we increase the order of NOMA clusters, and subsequently decrease the number of user groups?"} for a wide range of number of users and their minimum rate demands. The latter knowledge is highly necessary since multiplexing a large number of users would cause high complexity cost at the users' hardware, due to the SIC and existing error propagation \cite{7263349,9154358}.
\end{itemize} 

The rest of this paper is organized as follows: Section \ref{Section sysmodel} describes the general system model of downlink single-cell NOMA. Our proposed optimal water-filling algorithm for solving the sum-rate maximization problem in NOMA is presented in Section \ref{Section solution}. Numerical results are presented in Section \ref{Section simulation}. Our conclusions and future research directions are presented in Section \ref{Section conclusion}.

\section{Downlink Single-Cell NOMA}\label{Section sysmodel}
Consider the downlink channel of a multiuser system, where a BS serves $K$ users with limited processing capabilities within a unit time slot. The duration of each time slot is small such that the channel is time-invariant within the time slot. The set of users is denoted by $\mathcal{K}=\{1,\dots,K\}$. The maximum number of multiplexed users is $U^\text{max}$. Hence, the fully SC-SIC technique is infeasible when $U^\text{max} < K$. In NOMA, users are grouped into $N$ NOMA clusters for SC-SIC\footnote{In NOMA, the condition $U^\text{max} < K$ implies that $N \geq 2$.}. 
Each NOMA cluster operates in an isolated subchannel based on FDMA. We assume that the total bandwidth $W$ (Hz) is equally divided into $N$ isolated subchannels, where the bandwidth of each subchannel is $W_s=W/N$ \cite{7982784}. The set of subchannels is denoted by $\mathcal{N}=\{1,\dots,N\}$. The set of multiplexed users on subchannel $n$ is denoted by $\mathcal{K}_n$. In NOMA, each user belongs to only one cluster \cite{6692652,7982784}. As a result, we have $\mathcal{K}_n \cap \mathcal{K}_m = \emptyset, \forall n,m \in \mathcal{N},~ n \neq m$.
Moreover, we have $|\mathcal{K}_n| \leq U^\text{max},~\forall n \in \mathcal{N}$, where $|.|$ indicates the cardinality of a finite set.

Each subchannel $n$ can be modeled as a SISO Gaussian BC. The received signal at user $k \in \mathcal{K}_n$ on subchannel $n$ is
\begin{equation}\label{received signal}
	y^n_{k}= \sum\limits_{i \in \mathcal{K}_n}
	\sqrt{p^n_{i}} g^n_{k} s^n_{i}= \underbrace{\sqrt{p^n_{k}} g^n_{k} s^n_{k}}_\text{intended signal} + \underbrace{\sum\limits_{i \in \mathcal{K}_n\setminus\{k\}} \sqrt{p^n_{i}} g^n_{k} s^n_{i}}_\text{co-channel interference} + z^n_{k},
\end{equation}
where $s^n_{k} \sim \mathcal{CN} (0,1)$ and $p^n_{k} \geq 0$ are the modulated symbol from Gaussian codebooks, and transmit power of user $k$ on subchannel $n$, respectively. Obviously, $p^n_{k}=0, \forall k \notin \mathcal{K}_n$. Moreover, $g^n_{k}$ is the (generally complex) channel gain
%\footnote{The channel gains include both the large-scale (including path loss dependent and shadowing) and independent small-scale fading.}
from the BS to user $k$ on subchannel $n$, and $z^n_{k}\sim \mathcal{CN} (0,\sigma^n_{k})$ is the additive white Gaussian noise (AWGN).
We assume that perfect channel state information (CSI) of all the users is available at the schedular \cite{rezvani2021optimal,7982784}.

Based on the NOMA protocol, SC-SIC is applied to each multiuser subchannel. Independent from the power allocation among all the users, the channel-to-noise (CNR)-based decoding order is optimal in each subchannel \cite{NIFbook,6692652,7263349,9154358,8823873,rezvani2021optimal,7982784}. Let $h^n_{k}=|g^n_{k}|^2/\sigma^n_{k},~\forall n \in \mathcal{N}, k \in \mathcal{K}_n$. 
Then, the decoding order based on $h^n_{i} > h^n_{j} \Rightarrow i \to j,\forall i,j \in \mathcal{K}_n$ is optimal, where $i \to j$ represents that user $i$ fully decodes (and then cancels) the signal of user $j$ before decoding its desired signal on subchannel $n$. We call the stronger user $i$ as the user with higher decoding order in the multiplexed user pair $i,j \in \mathcal{K}_n$. The SIC protocol in each isolated subchannel is the same as the SIC protocol of fully SC-SIC. 
In the SIC of NOMA, the user with the highest decoding order, also called NOMA cluster-head user, decodes and cancels the signal of all the other multiplexed users in each subchannel. Hence, the NOMA cluster-head users do not experience any interference. For each subchannel $n$, the index of NOMA cluster-head user is denoted by $\Phi_n=\argmax\limits_{k \in \mathcal{K}_n} h^n_{k}$. For the case that a subchannel is assigned to only one user, i.e., $|\mathcal{K}_n|=1$, this single user which does not experience any interference is indeed the NOMA cluster-head user on that subchannel. That is, NOMA is a combination of SC-SIC and OMA. After successful SIC, the signal-to-interference-plus-noise ratio (SINR) of user $k$ for decoding its own signal $s^n_{k}$ is $\gamma^n_{k} = \frac{p^n_{k} h^n_{k}}{\sum\limits_{j \in \mathcal{K}_n, \atop h^n_{j} > h^n_{k}} p^n_{j} h^n_{k} + 1}$. Subsequently, the achievable rate (in bps) of user $k \in \mathcal{K}_n$ based on the Shannon's capacity formula is given by \cite{NIFbook}
\begin{equation}\label{achiev se}
	R^n_{k} (\boldsymbol{p}^n) = W_s \log_2\left( 1+ \gamma^n_{k} (\boldsymbol{p}^n) \right),
\end{equation}
where $\boldsymbol{p}^n=[p^n_{k}],~\forall k \in \mathcal{K}$ is the vector of allocated powers to all the users on subchannel $n$.
Assume that the set of NOMA clusters, i.e., $\mathcal{K}_n,~\forall n \in \mathcal{N}$ is pre-defined \cite{7982784}. The general power allocation problem for maximizing users sum-rate under the individual minimum rate demands of users in NOMA is formulated by
\begin{subequations}\label{pureNOMA problem}
	\begin{align}\label{obf pureNOMA problem}
		\max_{ \boldsymbol{p} \geq 0}\hspace{.0 cm}	
		~~ & \sum\limits_{n \in \mathcal{N}} \sum\limits_{k \in \mathcal{K}_n} R^n_{k} (\boldsymbol{p}^n)
		\\
		\text{s.t.}~~
		\label{Constraint minrate}
		& R^n_{k} (\boldsymbol{p}^n) \geq R^{\text{min}}_{k},~\forall n \in \mathcal{N}, k \in \mathcal{K}_n,
		\\
		\label{Constraint cell power}
		& \sum\limits_{n \in \mathcal{N}} \sum\limits_{k \in \mathcal{K}_n} p^n_{k} \leq P^{\text{max}},
		\\
		\label{Constraint mask power}
		& \sum\limits_{k \in \mathcal{K}_n} p^n_{k} \leq P^{\text{mask}}_n, \forall n \in \mathcal{N},
	\end{align}
\end{subequations}
where 
\eqref{Constraint minrate} is the per-user minimum rate constraint, in which $R^{\text{min}}_{k}$ is the individual minimum rate demand of user $k$. \eqref{Constraint cell power} is the cellular power constraint, where $P^{\text{max}}$ denotes the maximum available power of the BS. \eqref{Constraint mask power} is the maximum per-carrier power constraint, where $P^{\text{mask}}_n$ denotes the maximum allowable power on subchannel\footnote{We do not impose any specific condition on $P^{\text{mask}}_n$. We only take into account $P^{\text{mask}}_n$ in our analysis to keep the generality, such that $P^{\text{mask}}_n=P^{\text{max}},~\forall n \in \mathcal{N}$ as special case.} $n$. For convenience, we denote the general power allocation matrix as $\boldsymbol{p}=[\boldsymbol{p}^n],\forall n \in \mathcal{N}$.

\section{Water-Filling for Sum-Rate Maximization Problem}\label{Section solution}
Here, we propose a water-filling algorithm to find the optimal solution of \eqref{pureNOMA problem}.
\begin{lemma}\label{lemma convex NOMA}
	Problem \eqref{pureNOMA problem} is convex in $\boldsymbol{p}$.
\end{lemma}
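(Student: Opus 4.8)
The plan is to verify the two hallmarks of a convex program: that the constraint set cut out by \eqref{Constraint minrate}--\eqref{Constraint mask power} together with $\boldsymbol{p}\geq 0$ is convex, and that the objective \eqref{obf pureNOMA problem} is concave (the problem is a maximization).

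The feasible set is the easy part. The power-budget constraints \eqref{Constraint cell power}, \eqref{Constraint mask power} and the nonnegativity $\boldsymbol{p}\geq 0$ are affine, so they define a polytope. The only constraint that is not manifestly affine is the rate demand \eqref{Constraint minrate}. Setting $\bar{\gamma}_k := 2^{R^{\text{min}}_k/W_s}-1$, the inequality $R^n_k(\boldsymbol{p}^n)\geq R^{\text{min}}_k$ is equivalent to $\gamma^n_k(\boldsymbol{p}^n)\geq \bar{\gamma}_k$. Because the SINR denominator $h^n_k\sum_{j\in\mathcal{K}_n,\, h^n_j>h^n_k}p^n_j+1$ is strictly positive for every $\boldsymbol{p}^n\geq 0$, I may cross-multiply without reversing the inequality, which turns \eqref{Constraint minrate} into
\begin{equation}\label{linearized qos}
	p^n_k \;\geq\; \bar{\gamma}_k\!\!\sum_{j\in\mathcal{K}_n,\, h^n_j>h^n_k}\!\! p^n_j \;+\; \frac{\bar{\gamma}_k}{h^n_k}.
\end{equation}
This is affine, so \eqref{Constraint minrate} is a half-space and the feasible set is convex.

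Concavity of the objective is the crux, and I expect this to be the main obstacle. The trap is that each per-user rate, written with the cumulative power notation, reads $R^n_k=W_s\bigl[\log_2(h^n_k P_k+1)-\log_2(h^n_k P_{k-1}+1)\bigr]$, where $P_k$ is the total power assigned to the users whose decoding order is at least that of $k$; this is a difference of two concave functions and is \emph{not} concave by itself, so a summand-by-summand argument is hopeless. My strategy is instead to sum over the whole cluster and regroup by the cumulative power levels. Relabeling the users of cluster $n$ so that $h^n_1>\cdots>h^n_M$ with $M:=|\mathcal{K}_n|$, and letting $P_k=\sum_{j=1}^{k}p^n_j$ (so $P_0=0$), an index shift collapses the sum to
\begin{equation}\label{regrouped sumrate}
	\sum_{k\in\mathcal{K}_n}\! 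R^n_k = W_s\Bigl[\sum_{k=1}^{M-1}\!\bigl(\log_2(h^n_k P_k+1)-\log_2(h^n_{k+1}P_k+1)\bigr)+\log_2(h^n_M P_M+1)\Bigr].
\end{equation}

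It then suffices to prove each summand in \eqref{regrouped sumrate} is concave in $\boldsymbol{p}^n$. The final term is the logarithm of an affine function, hence concave. For a generic difference term I would examine $g(S):=\log_2(aS+1)-\log_2(bS+1)$ with $a:=h^n_k>b:=h^n_{k+1}>0$; a one-line computation gives $g''(S)\propto \frac{b^2}{(bS+1)^2}-\frac{a^2}{(aS+1)^2}$, and since $x\mapsto x/(xS+1)$ is increasing for $x>0$ this bracket is negative, so $g$ is concave. The decisive ingredient is precisely the strict decoding order $h^n_k>h^n_{k+1}$, which forces the coefficient inside the subtracted logarithm to be the smaller one. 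Since every $P_k$ is an affine function of $\boldsymbol{p}^n$, composing the concave $g$ with this affine map preserves concavity, and a finite sum of concave functions is concave; the single-user case $M=1$ is covered trivially by the last term alone. Finally, the clusters $\mathcal{K}_n$ are disjoint, so distinct subchannels involve disjoint blocks of power variables and the total objective is a separable sum of concave functions, hence concave. A concave objective over a convex feasible set is exactly the assertion of the lemma.
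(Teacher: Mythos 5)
Your proof is correct, and its treatment of the feasible set coincides with the paper's: both arguments cross-multiply the SINR constraint (legitimate because the interference-plus-noise denominator is strictly positive) to obtain an affine inequality, so the QoS constraints together with the affine power constraints cut out a convex set. Where you genuinely depart from the paper is the crux, concavity of the objective. The paper disposes of it in one sentence, appealing to the known convexity of the sum-capacity region of a degraded SISO Gaussian BC, and alternatively asserting---without computation---that the Hessian of the per-cluster sum-rate is negative definite. You instead give a self-contained elementary derivation: rewrite the per-cluster sum-rate in cumulative powers, telescope it into $\sum_{k=1}^{M-1}\bigl(\log_2(h^n_kP_k+1)-\log_2(h^n_{k+1}P_k+1)\bigr)+\log_2(h^n_MP_M+1)$, and verify by a second-derivative computation that each difference term is concave precisely because $h^n_k>h^n_{k+1}$; concavity of the full objective then follows from composition with affine maps, summation, and the disjointness of the clusters. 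Your route buys rigor and transparency: it makes explicit where the decoding order enters (the larger gain must sit inside the positive logarithm), and it correctly diagnoses that a term-by-term argument on the individual rates $R^n_k$ would fail since each is a difference of concave functions. The paper's route buys brevity and the stronger claim of \emph{strict} concavity---which, incidentally, your argument also yields with one extra line, since each difference term has strictly negative second derivative and the map from $\boldsymbol{p}^n$ to the cumulative powers $(P_1,\dots,P_M)$ is an invertible linear map.
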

\begin{proof}
	The sum-rate of users in each NOMA cluster, i.e., $\sum\limits_{k \in \mathcal{K}_n} R^n_{k} (\boldsymbol{p}^n)$, is indeed the sum-capacity of each subchannel $n$ whose entire region is convex \cite{NIFbook}. Another proof is calculating the Hessian of the sum-rate function which is proved to be negative definite, so the sum-rate is strictly concave. Due to the isolation among subchannels, the overall sum-rate in \eqref{obf pureNOMA problem} is strictly concave in $\boldsymbol{p}$. Besides, the power constraints in \eqref{Constraint cell power} and \eqref{Constraint mask power} are affine, so are convex. The minimum rate constraint in \eqref{Constraint minrate} can be equivalently transformed to the following affine form as
	$2^{(R^{\text{min}}_{k}/W_s)} \bigg(\sum\limits_{j \in \mathcal{K}_n, \atop h^n_{j} > h^n_{k}} p^n_{j} h^n_{k} + 1\bigg) \leq 
	\sum\limits_{j \in \mathcal{K}_n, \atop h^n_{j} > h^n_{k}} p^n_{j} h^n_{k} + 1 + p^n_k h^n_k,~\forall n \in \mathcal{N}, k \in \mathcal{K}_n.$
	Accordingly, the feasible set of \eqref{pureNOMA problem} is convex. Summing up, the problem \eqref{pureNOMA problem} is convex in $\boldsymbol{p}$.
\end{proof}
According to Lemma \ref{lemma convex NOMA}, the general sum-rate maximization problem in NOMA is convex. To find the globally optimal solution, we first define $q_n=\sum\limits_{k \in \mathcal{K}_n} p^n_{k}$ as the power consumption of NOMA cluster (or subchannel) $n$. The main problem \eqref{pureNOMA problem} can be equivalently transformed to the following joint intra- and inter-cluster power allocation problem as
\begin{subequations}\label{SCuser problem}
	\begin{align}\label{obf SCuser problem}
		\max_{\boldsymbol{p} \geq 0, \boldsymbol{q} \geq 0}\hspace{.0 cm}	
		~~ & \sum\limits_{n \in \mathcal{N}} \sum\limits_{k \in \mathcal{K}_n} R^n_{k} (\boldsymbol{p}^n)
		\\
		\text{s.t.}~~
		\label{Constraint minrate SCuser}
		& R^n_{k} (\boldsymbol{p}^n) \geq R^{\text{min}}_{k},~\forall n \in \mathcal{N}, k \in \mathcal{K}_n,
		\\
		\label{Constraint cell q}
		& \sum\limits_{n \in \mathcal{N}} q_n \leq P^{\text{max}},
		\\
		\label{Constraint sumpow}
		& \sum\limits_{k \in \mathcal{K}_n} p^n_{k} = q_n,~\forall n \in \mathcal{N},
		\\
		\label{Constraint mask q}
		& 0 \leq q_n \leq P^{\text{mask}}_n, \forall n \in \mathcal{N}.
	\end{align}
\end{subequations}
In the following, we first convert the feasible set of \eqref{SCuser problem} to the intersection of closed-boxes and the affine cellular power constraint.
\begin{lemma}\label{lemma feasiblecluster}
	The feasible set of \eqref{SCuser problem} is the intersection of $q_n \in \left[Q^\text{min}_n,P^{\text{mask}}_n\right],~\forall n \in \mathcal{N}$, and cellular power constraint $\sum\limits_{n \in \mathcal{N}} q_n \leq P^{\text{max}}$. The constant $
	Q^\text{min}_n=
	\sum\limits_{k \in \mathcal{K}_n} \beta_{k} \left(\prod\limits_{j \in \mathcal{K}_n \atop h^n_{j} > h^n_{k}} \left(1+\beta_{j}\right) +\frac{1}{h_k}+\sum\limits_{j \in \mathcal{K}_n \atop h^n_{j} > h^n_{k}} \frac{ \beta_{j} \prod\limits_{l \in \mathcal{K}_n \atop h^n_{k} < h^n_{l} < h^n_{j}} \left(1+\beta_{l}\right)}{h_j}\right)
	$ is the lower-bound of power budget $q_n$,
	in which $\beta_k=2^{(R^{\text{min}}_k/W_s)} -1,~\forall k \in \mathcal{K}$.
\end{lemma}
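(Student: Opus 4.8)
The plan is to linearize the quality-of-service constraints and then reduce the feasibility question to a one-dimensional statement about each cluster's budget $q_n$. Using the manipulation already carried out in the proof of Lemma~\ref{lemma convex NOMA}, each rate constraint \eqref{Constraint minrate SCuser} is equivalent to the SINR threshold $\gamma^n_k \geq \beta_k$, i.e.\ to the affine inequality $p^n_k \geq \beta_k\big(\sum_{j \in \mathcal{K}_n,\, h^n_j > h^n_k} p^n_j + 1/h^n_k\big)$. Since the subchannels are isolated, the constraints couple the clusters only through \eqref{Constraint cell q}; hence it suffices to show that, for each $n$, the set of budgets $q_n=\sum_{k\in\mathcal K_n}p^n_k$ realized by QoS-feasible intra-cluster allocations is exactly $[Q^\text{min}_n,\,P^\text{mask}_n]$, after which \eqref{Constraint cell q} is retained verbatim. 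The upper endpoint is immediate from the mask constraint \eqref{Constraint mask q}; the work lies in identifying the lower endpoint and in proving that no gaps occur between the two.

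To see that the realized budgets form an interval, observe that the QoS-feasible set inside cluster $n$ is an intersection of affine half-spaces, hence convex, and $q_n=\sum_k p^n_k$ is a linear functional, so its image is an interval. Its infimum is the minimum total power $Q^\text{min}_n$, and the interval is unbounded above: increasing the power of the weakest user in the cluster raises only that user's own SINR, since by the decoding order the strictly stronger users count only higher-gain signals as interference and there is no weaker user to disturb. Intersecting this ray with the mask constraint yields $q_n\in[Q^\text{min}_n,P^\text{mask}_n]$, and conversely every QoS-feasible allocation satisfies $\sum_k p^n_k\ge Q^\text{min}_n$ by definition of the minimum, so the characterization is exact.

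It remains to evaluate $Q^\text{min}_n$ in closed form. First I would argue that at the minimizer every constraint \eqref{Constraint minrate SCuser} is tight: if user $k$'s constraint were slack, I could decrease $p^n_k$, which strictly lowers $\sum_k p^n_k$ while only relaxing the constraints of weaker users (the variable $p^n_k$ appears on the left-hand side of its own inequality and, as interference, only in the inequalities of users whose channel gain is smaller than $h^n_k$), contradicting optimality. With all inequalities active, the system is triangular in the decoding order: the cluster-head $\Phi_n$ sees no interference and fixes $p^n_{\Phi_n}=\beta_{\Phi_n}/h^n_{\Phi_n}$, and each successively weaker user's power is a known affine function of the already-determined stronger powers. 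Solving this recursion from the cluster-head downward and summing yields $Q^\text{min}_n$, and I would confirm that the result matches the stated expression by induction on $|\mathcal K_n|$, organizing the algebra with the previously derived closed-form intra-cluster allocation of \cite{rezvani2021optimal}.

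The main obstacle is this last step: carrying out the triangular back-substitution and collapsing the nested sums into the compact product–sum form of the stated $Q^\text{min}_n$, where the bookkeeping of the decoding-order indices in the factors $\prod_{h^n_l<h^n_j}(1+\beta_l)$ is delicate. The clean way to control it is the induction itself: appending one further, weakest user multiplies the accumulated budget by the corresponding $(1+\beta)$ factor and adds a single new $\beta/h$ contribution, and verifying that this reproduces the claimed product-and-sum structure for $Q^\text{min}_n$ is the crux of the proof.
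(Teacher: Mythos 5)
Your route is, in substance, the same as the paper's: reduce feasibility of \eqref{SCuser problem} to the per-cluster power-minimization problem, exploit the triangular (degraded) interference structure to solve it in closed form, and intersect the result with \eqref{Constraint cell q} and \eqref{Constraint mask q}. The paper does this by invoking strong duality and citing the KKT analysis of \cite{rezvani2021optimal}; you do it from first principles, and your added details are correct and actually fill gaps the paper leaves implicit: the affine rewriting of \eqref{Constraint minrate SCuser}, the perturbation argument that all rate constraints are tight at the minimizer (decreasing $p^n_k$ only relaxes the constraints of users weaker than $k$), and the observation that the set of realizable budgets is an interval, unbounded above because raising the weakest user's power degrades nobody's SINR --- this last point is exactly what is needed for the equality constraint \eqref{Constraint sumpow} to be satisfiable at every $q_n \geq Q^\text{min}_n$.

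However, the final step of your plan --- ``confirm that the result matches the stated expression by induction'' --- would fail, and you should know the failure is not in your algebra. Your recursion (appending the weakest user $w$ maps $Q \mapsto (1+\beta_w)Q + \beta_w/h^n_w$) is correct and yields
\begin{equation*}
Q^\text{min}_n \;=\; \sum_{k \in \mathcal{K}_n} \frac{\beta_k}{h^n_k} \prod_{j \in \mathcal{K}_n:\, h^n_j < h^n_k} \left(1+\beta_j\right)
\;=\; \sum_{k \in \mathcal{K}_n} \beta_k \left( \frac{1}{h^n_k} + \sum_{j \in \mathcal{K}_n:\, h^n_j > h^n_k} \frac{\beta_j \prod\limits_{l \in \mathcal{K}_n:\, h^n_k < h^n_l < h^n_j} \left(1+\beta_l\right)}{h^n_j} \right),
\end{equation*}
whereas the expression printed in the lemma (and repeated in the paper's own proof) contains the additional dimensionless summand $\prod_{j:\, h^n_j > h^n_k}(1+\beta_j)$ inside each bracket; those extra terms telescope to $\prod_{k \in \mathcal{K}_n}(1+\beta_k) - 1 > 0$, so the printed $Q^\text{min}_n$ strictly exceeds the true minimum power. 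A two-user sanity check makes this concrete: with $h^n_1 > h^n_2$ the minimum is $\frac{\beta_1}{h^n_1}(1+\beta_2) + \frac{\beta_2}{h^n_2}$, while the printed formula gives this plus $(1+\beta_1)(1+\beta_2)-1$; note also that adding the dimensionless product to $1/h^n_k$, which carries units of power, already signals a misprint. So your induction reproduces the product-and-sum structure \emph{without} the leading product term; rather than asserting agreement with the stated expression, your proof should present the derived compact form and flag the discrepancy, since as written it promises a verification that cannot literally go through.
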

\begin{proof}
	The feasibility of \eqref{SCuser problem} can be determined by solving the power minimization problem as
	\begin{equation}\label{min SCuser2 problem}
		\min_{\boldsymbol{p} \geq 0, \boldsymbol{q} \geq 0}~\sum\limits_{n \in \mathcal{N}} q_n~~~~~~\text{s.t.}~\eqref{Constraint minrate SCuser}\text{-}\eqref{Constraint mask q}.
	\end{equation}
	In the following, we find the feasibility region of $\boldsymbol{p}$. Assume that the feasible region of \eqref{min SCuser2 problem} is non-empty. The problem \eqref{min SCuser2 problem} is also convex with affine objective function. The strong duality holds since \eqref{Constraint cell q} and \eqref{Constraint mask q} hold with strict inequalities. In the KKT optimality conditions analysis in Appendix C of \cite{rezvani2021optimal}, we proved that in fully SC-SIC, the maximum power budget does not have any effect on the optimal powers obtained in the power minimization problem. And, at the optimal point, all the multiplexed users get power to only maintain their minimal rate demands. In this system, if the strong duality holds, the problem can be equivalently decoupled into $N$ single-carrier power minimization problems. In each subchannel $n$, regarless of the clusters power budget $\boldsymbol{q}$, the optimal power of each user can be obtained in closed form given by 
	$
		{p^n_{k}}^*=\beta_{k} \left(\prod\limits_{j \in \mathcal{K}_n \atop h^n_{j} > h^n_{k}} \left(1+\beta_{j}\right) +\frac{1}{h_k}+\sum\limits_{j \in \mathcal{K}_n \atop h^n_{j} > h^n_{k}} \frac{ \beta_{j} \prod\limits_{l \in \mathcal{K}_n \atop h^n_{k} < h^n_{l} < h^n_{j}} \left(1+\beta_{l}\right)}{h_j}\right), \forall k \in \mathcal{K}_n,
	$
	where $\beta_k=2^{(R^{\text{min}}_k/W_s)} -1,~\forall k \in \mathcal{K}$. Since the solution of \eqref{min SCuser2 problem} is unique, it can be easily shown that the optimal powers are indeed component-wise minimum. It other words $q_n (\boldsymbol{p^*}^n) \leq q_n (\boldsymbol{p}^n) ,~n \in \mathcal{N}$. Therefore, $Q^\text{min}_n=q_n (\boldsymbol{p^*}^n)=\sum\limits_{k \in \mathcal{K}_n} {p^n_{k}}^*$ is indeed the lower-bound of $q_n$. According to \eqref{Constraint mask q}, we guarantee that any $q_n \in \left[Q^\text{min}_n,P^{\text{mask}}_n\right],~\forall n \in \mathcal{N}$ satisfying \eqref{Constraint cell q} is feasible, and the proof is completed.
\end{proof}
The feasibility of problem \eqref{SCuser problem} can be directly determined by utilizing Lemma \ref{lemma feasiblecluster}. In the following, we find the closed-form of optimal intra-cluster power allocation as a linear function of $\boldsymbol{q}$.
\begin{proposition}\label{proposition jointcluster}
	For any given feasible $\boldsymbol{q}=[q_1,\dots,q_n]$, the optimal intra-cluster powers can be obtained by
	\begin{equation}\label{opt IC i}
		{p^n_{k}}^*=\left(\beta_{k} \prod\limits_{j \in \mathcal{K}_n \atop h^n_{j} < h^n_{k}} \left(1-\beta_{j}\right)\right) q_n + c^n_{k},~~\forall k \in \mathcal{K}_n\setminus\{\Phi_n\},
	\end{equation}
	and
	\begin{equation}\label{opt IC M}
		{p^n_{\Phi_n}}^*= \left(1 - \sum\limits_{i \in \mathcal{K}_n \atop h^n_{i} < h^n_{\Phi_n}} \beta_{i} \prod\limits_{j \in \mathcal{K}_n \atop h^n_{j} < h^n_{i}} \left(1-\beta_{j}\right)\right) q_n - \sum\limits_{i \in \mathcal{K}_n \atop h^n_{i} < h^n_{\Phi_n}} c^n_{i},
	\end{equation}
	where $\beta_{k} = \frac{2^{(R^{\text{min}}_{k}/W_s)}-1} {2^{(R^{\text{min}}_{k}/W_s)}}, ~\forall k \in \mathcal{K}$, 
	$c^n_{k}=\beta_{k} \left( \frac{1}{h^n_{k}} - \sum\limits_{j \in \mathcal{K}_n \atop h^n_{j} < h^n_{k}} \frac{\prod\limits_{l \in \mathcal{K}_n \atop h^n_{j} < h^n_{l} < h^n_{k}} \left(1-\beta_{l}\right) \beta_{j}} {h^n_{j}}\right),~ \forall n \in \mathcal{N},~k \in \mathcal{K}_n$.
\end{proposition}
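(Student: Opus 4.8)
The plan is to exploit Lemma~\ref{lemma convex NOMA} to decouple \eqref{SCuser problem} across subchannels: once $\boldsymbol{q}$ is fixed, the objective separates and each cluster carries a fixed internal budget via \eqref{Constraint sumpow}, so it suffices to solve, in every cluster $n$, the concave program of maximizing $\sum_{k\in\mathcal{K}_n}R^n_k(\boldsymbol{p}^n)$ subject to $R^n_k\ge R^{\text{min}}_k$, $p^n_k\ge 0$, and $\sum_{k\in\mathcal{K}_n}p^n_k=q_n$. By Lemma~\ref{lemma convex NOMA} this per-cluster problem is convex, and by Lemma~\ref{lemma feasiblecluster} it has a strictly feasible interior whenever $q_n>Q^{\text{min}}_n$, so the KKT conditions are necessary and sufficient; the boundary case $q_n=Q^{\text{min}}_n$ leaves a single feasible point and is handled trivially. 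I would use the KKT system to first pin down the \emph{structure} of the optimizer and then recover the closed forms by elimination.

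The crux is the structural claim underlying the first contribution: at the per-cluster optimum every user other than $\Phi_n$ attains its rate floor \eqref{Constraint minrate SCuser} with equality, while only the interference-free cluster-head may carry surplus power. I would prove this from the KKT stationarity equations, writing the condition for each $p^n_k$ in terms of its floor multiplier $\lambda_k$ and the common budget multiplier, and using that $\partial R^n_j/\partial p^n_k\neq 0$ only when $h^n_k\ge h^n_j$. The decisive feature is that $R^n_{\Phi_n}$ is strictly increasing in $p^n_{\Phi_n}$ with no interference term, which forces $\lambda_k>0$, hence an active floor, for every $k\neq\Phi_n$. I expect this to be the main obstacle, because the SIC coupling is delicate: raising $p^n_{\Phi_n}$ raises the interference seen by the intermediate users, so a naive two-user power exchange is not monotone, and the conclusion must be read off from the multiplier signs in the full stationarity system rather than from a single perturbation.

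With the structure fixed, the rest is a routine elimination. Ordering the users of cluster $n$ as $h^n_{k_1}<\dots<h^n_{k_m}$ with $k_m=\Phi_n$, the active floor of each $k_i\neq\Phi_n$ reads $p^n_{k_i}=(2^{R^{\text{min}}_{k_i}/W_s}-1)\bigl(\sum_{l>i}p^n_{k_l}+1/h^n_{k_i}\bigr)$. Substituting $\sum_{l>i}p^n_{k_l}=q_n-\sum_{l\le i}p^n_{k_l}$ and solving for $p^n_{k_i}$ yields the first-order linear recursion $p^n_{k_i}=\beta_{k_i}\bigl(q_n-\sum_{l<i}p^n_{k_l}+1/h^n_{k_i}\bigr)$, where the normalization by $2^{R^{\text{min}}_{k_i}/W_s}$ is precisely what produces the definition $\beta_{k}=(2^{(R^{\text{min}}_{k}/W_s)}-1)/2^{(R^{\text{min}}_{k}/W_s)}$. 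I would then solve the induced difference equation for the cumulative power, whose solution is $\sum_{l\le i}p^n_{k_l}=\bigl(1-\prod_{l\le i}(1-\beta_{k_l})\bigr)q_n$ plus a $q_n$-independent constant; differencing successive terms gives the $q_n$-coefficient $\beta_{k}\prod_{j:\,h^n_j<h^n_k}(1-\beta_j)$ and the constant $c^n_k$ of \eqref{opt IC i}. The cluster-head formula \eqref{opt IC M} then follows from $p^n_{\Phi_n}=q_n-\sum_{k\neq\Phi_n}p^n_k$, the coefficients telescoping via $\sum_i\beta_{k_i}\prod_{l<i}(1-\beta_{k_l})=1-\prod_l(1-\beta_{k_l})$ and the constants telescoping likewise. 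Finally I would note $0<\beta_k<1$, so every coefficient is positive and the powers are admissible, and as a sanity check evaluate \eqref{opt IC i}--\eqref{opt IC M} at $q_n=Q^{\text{min}}_n$, which reproduces the minimum-power allocation of Lemma~\ref{lemma feasiblecluster}.
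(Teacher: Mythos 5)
Your proposal follows essentially the same route as the paper: fix $\boldsymbol{q}$, decouple \eqref{SCuser problem} into $N$ per-cluster convex subproblems, invoke the structural fact that every user other than $\Phi_n$ sits exactly at its rate floor, and eliminate to obtain the closed forms. The only difference is one of self-containedness — the paper outsources both the KKT structural result and the resulting algebra to Appendix B of \cite{rezvani2021optimal}, whereas you prove them in place; your recursion for $p^n_{k_i}$, the cumulative-power solution, and the telescoping identity $\sum_i\beta_{k_i}\prod_{l<i}(1-\beta_{k_l})=1-\prod_l(1-\beta_{k_l})$ all check out.
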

\begin{proof}
	For any given feasible $\boldsymbol{q}$, \eqref{Constraint cell q} and \eqref{Constraint mask q} can be removed from \eqref{SCuser problem}.
	Then, the convex problem \eqref{SCuser problem} can be equivalently divided into $N$ intra-cluster convex power allocation subproblems, where for each subchannel $n$, we find the optimal powers according to Appendix B in \cite{rezvani2021optimal}. It is proved that at the optimal point of sum-rate maximization problem in fully SC-SIC, all the users with lower decoding order get power to only maintain their minimum rate demands.
\end{proof}
In contrast to fully SC-SIC in \cite{rezvani2021optimal}, in NOMA, there is a competition among cluster-head users to get the rest of the cellular power. According to \eqref{opt IC M}, the optimal power of the NOMA cluster-head user $\Phi_n$ can be obtained as a function of $q_n$ given by
\begin{equation}\label{opt IC M qmin}
	{p^n_{\Phi_n}}^*=\alpha_n q_n - c_n, \forall n \in \mathcal{N},
\end{equation}
where $\alpha_n=\left(1 - \sum\limits_{i \in \mathcal{K}_n \atop h^n_{i} < h^n_{\Phi_n}} \beta_{i} \prod\limits_{j \in \mathcal{K}_n \atop h^n_{j} < h^n_{i}} \left(1-\beta_{j}\right)\right)$ and $c_n=\sum\limits_{i \in \mathcal{K}_n \atop h^n_{i} < h^n_{\Phi_n}} c^n_{i}$ are nonnegative constants. According to Proposition \ref{proposition jointcluster} and \eqref{opt IC M qmin}, the optimal value of \eqref{obf SCuser problem} for the given $q_n$ can be formulated in closed form as
\begin{multline}\label{optimal value}
	R^n_\text{opt} (q_n)=\sum\limits_{k \in \mathcal{K}_n} R^n_{k} (\boldsymbol{p^*}^n) = \sum\limits_{k \in \mathcal{K}_n \atop k \neq \Phi_n} (R^{\text{min}}_{k}) + R^n_{\Phi_n} (q_n) 
	\\
	= \sum\limits_{k \in \mathcal{K}_n \atop k \neq \Phi_n} (R^{\text{min}}_{k}) + W_s \log_2\left( 1 + \left(\alpha_n q_n - c_n\right) h^n_{\Phi_n} \right).
\end{multline}
According to the above, \eqref{SCuser problem} is equivalently transformed to the following inter-cluster power allocation problem as
\begin{subequations}\label{SCuser problem eq} 
	\begin{align}\label{obf SCuser problem eq}
		\max_{\boldsymbol{q}}\hspace{.0 cm} 
		~~ & \sum\limits_{n \in \mathcal{N}} W_s \log_2\left( 1 + \left(\alpha_n q_n - c_n\right) h^n_{\Phi_n} \right)
		\\
		\text{s.t.}~~
		\label{Constraint cell eq}
		& \sum\limits_{n \in \mathcal{N}} q_n = P^{\text{max}},
		\\
		\label{Constraint mask eq}
		& q_n \in [Q^\text{min}_n,P^{\text{mask}}_n],~\forall n \in \mathcal{N}.
	\end{align}
\end{subequations}
In the objective function \eqref{obf SCuser problem eq}, we substituted the closed-form of optimal intra-cluster powers as a function of $\boldsymbol{q}$. Constraint \eqref{Constraint cell eq} is the cellular power constraint, and \eqref{Constraint mask eq} is to ensure that the individual minimum rate demand of all the users within each NOMA cluster is satisfied. The objective function \eqref{obf SCuser problem eq} is strictly concave, and the feasible set of \eqref{SCuser problem eq} is affine. Accordingly, problem \eqref{SCuser problem eq} is convex.
For convenience, let $\boldsymbol{\tilde{q}}=[\tilde{q}_n],\forall n \in \mathcal{N}$, where $\tilde{q}_n=q_n - \frac{c_n}{\alpha_n}$. Then, \eqref{SCuser problem eq} is equivalent to the following convex problem as
\begin{subequations}\label{eq1 problem} 
	\begin{align}\label{obf eq1 problem}
		\max_{\boldsymbol{\tilde{q}}}\hspace{.0 cm} 
		~~ & \sum\limits_{n \in \mathcal{N}} W_s \log_2\left( 1 + \tilde{q}_n H_n \right)
		\\
		\text{s.t.}~~
		\label{Constraint 1}
		& \sum\limits_{n \in \mathcal{N}} \tilde{q} = \tilde{P}^{\text{max}},
		\\
		\label{Constraint 2}
		& \tilde{q}_n \in [\tilde{Q}^\text{min}_n,\tilde{P}^{\text{mask}}_n],~\forall n \in \mathcal{N},
	\end{align}
\end{subequations}
where $H_n=\alpha_n h^n_{\Phi_n}$, $\tilde{P}^{\text{max}}=P^{\text{max}} - \frac{c_n}{\alpha_n}$, $\tilde{Q}^\text{min}_n=Q^\text{min}_n - \frac{c_n}{\alpha_n}$, and $\tilde{P}^{\text{mask}}_n=P^{\text{mask}}_n - \frac{c_n}{\alpha_n}$. The equivalent OMA problem \eqref{eq1 problem} can be solved by using the water-filling algorithm \cite{10.1155/2008/643081,8995606}. After some mathematical manipulations, the optimal $\tilde{q}^*_n$ can be obtained by
\begin{equation}\label{bisection opt form}
	\tilde{q}^*_n=
	\begin{cases}
		\frac{W_s/(\ln 2)}{\nu^*} - \frac{1}{H_n}, &\quad \left(\frac{W_s/(\ln 2)}{\nu^*} - \frac{1}{H_n}\right) \in [\tilde{Q}^\text{min}_n,\tilde{P}^{\text{mask}}_n]\\
		0, &\quad \text{otherwise}, \\ 
	\end{cases}
\end{equation}
such that $\boldsymbol{q}^*$ satisfies \eqref{Constraint 1}. Also, $\nu^*$ is the dual optimal corresponding to constraint \eqref{Constraint 1}.
The pseudo-code of the bisection method for finding $\nu^*$ is presented in Alg. \ref{Alg bisection}. 
\begin{algorithm}[tp]
	\caption{The bisection method for finding $\nu^*$ in \eqref{bisection opt form}.} \label{Alg bisection}
	\begin{algorithmic}[1]
		\STATE Initialize tolerance $\epsilon$, lower-bound $\nu_l$, upper-bound $\nu_h$, and maximum iteration $L$.
		\FOR {$l=1:L$}
		\STATE Set $\nu_m=\frac{\nu_l + \nu_h}{2}$.
		\\
		\STATE\textbf{if}~~$\sum\limits_{n \in \mathcal{N}} \max \left\{\tilde{Q}^\text{min}_n, \min \left\{ \left(\frac{W_s/(\ln 2)}{\nu_m} - \frac{1}{H_n}\right) , \tilde{P}^{\text{mask}}_n\right\} \right\} < \tilde{P}^{\text{max}}$~\textbf{then}~~~Set $\nu_h=:\nu_m$.
		\\
		\STATE\textbf{else}~~~Set $\nu_l=:\nu_m$.
		\\
		\STATE\textbf{end if}
		\STATE\textbf{if}~~$\frac{\tilde{P}^{\text{max}} - \sum\limits_{n \in \mathcal{N}} \max \left\{\tilde{Q}^\text{min}_n, \min \left\{ \left(\frac{W_s/(\ln 2)}{\nu_m} - \frac{1}{H_n}\right) , \tilde{P}^{\text{mask}}_n\right\} \right\}}{\tilde{P}^{\text{max}}} \leq \epsilon$~\textbf{then}~~~\textbf{break}.
		\\
		\STATE\textbf{end if}
		\ENDFOR
	\end{algorithmic}
\end{algorithm}
After finding $\boldsymbol{\tilde{q}}^*$, we obtain $\boldsymbol{q}^*$ by $q^*_n=(\tilde{q}^*_n + \frac{c_n} {\alpha_n}),~\forall n \in \mathcal{N}$. Then, we find the optimal powers according to Proposition \ref{proposition jointcluster}.

\begin{remark}\label{remark virtualuser}
	In the transformation of \eqref{pureNOMA problem} to \eqref{eq1 problem}, the NOMA system is equivalently transformed to a virtual OMA system including a single virtual BS with maximum power $\tilde{P}^{\text{max}}$, and $N$ virtual OMA users operating in $N$ subchannels with maximum allowable power $\tilde{P}^{\text{mask}}_n,~\forall n \in \mathcal{N}$. Each NOMA cluster $n \in \mathcal{N}$ is indeed a virtual OMA user whose channel gain is $H_n=\alpha_n h^n_{\Phi_n}$, where $\alpha_n$ is a decreasing function of the minimum rate demand of users with lower decoding order in NOMA cluster $n$, and the channel gain of the NOMA cluster-head user, whose index is $\Phi_n$. Each virtual user $n$ has also a minimum power demand $\tilde{Q}^\text{min}_n$ in order to guarantee the minimum rate demand of its multiplexed users in $\mathcal{K}_n$. For any given virtual power budget $\boldsymbol{\tilde{q}}$, the achievable rate of each virtual OMA user is sum-rate of its multiplexed users.
\end{remark}
The exemplary models of different multiple access techniques for a SISO Gaussian BC including KKT optimality conditions analysis as well as equivalent transformation to virtual OMA system are shown in Fig. \ref{Fig_KKTModel}.
\begin{figure*}[tp]
	\centering
	\subfigure[Equivalent model of fully SC-SIC: A single virtual user.]{
		\includegraphics[scale=0.3]{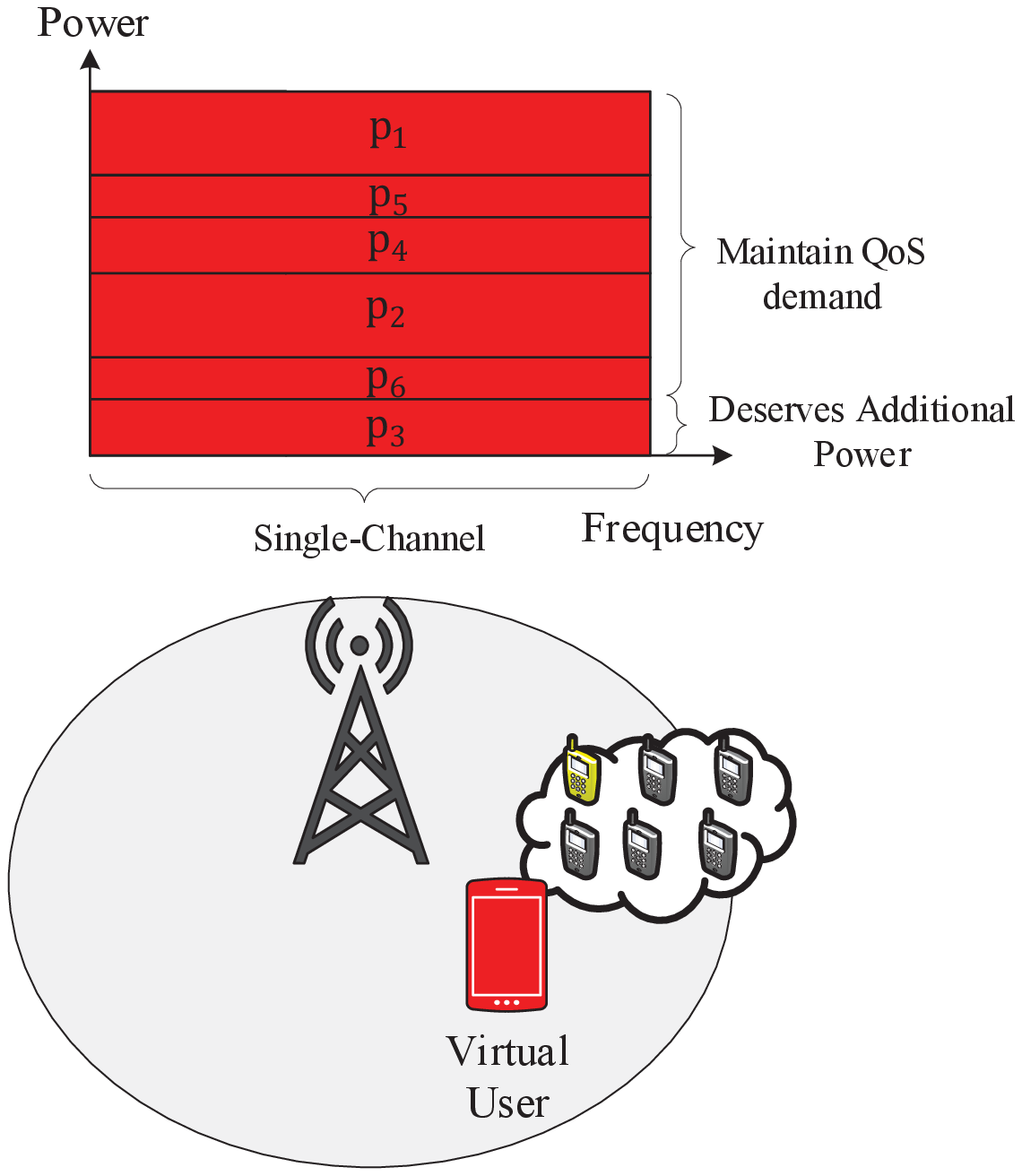}
		\label{Fully-SC-SIC-KKT}
	}\hfil
	\subfigure[Equivalent OMA model of NOMA: FDMA with $3$ virtual OMA users.]{
		\includegraphics[scale=0.3]{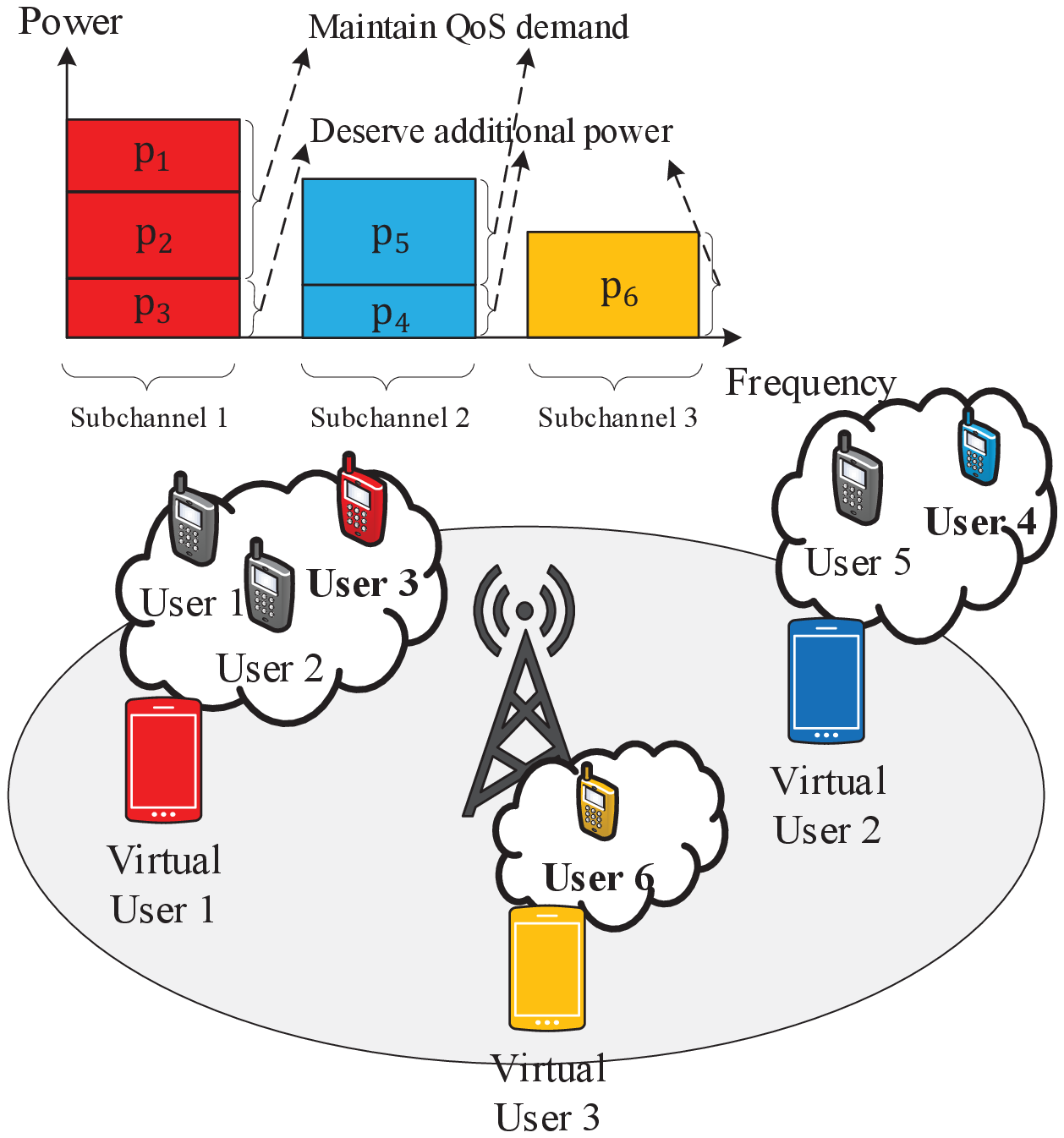}
		\label{NOMA-KKT}
	}\hfil
	\subfigure[FDMA: Real OMA users.]{
		\includegraphics[scale=0.3]{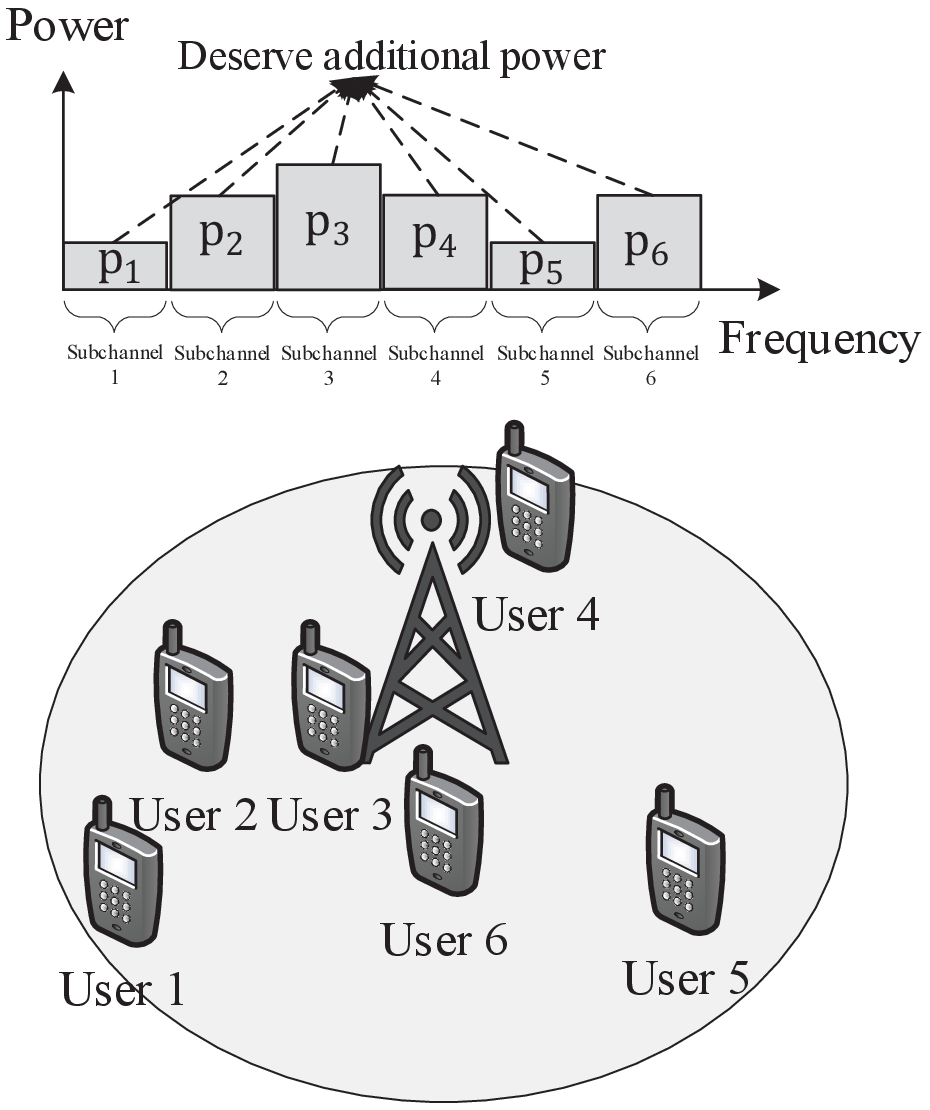}
		\label{OMA-KKT}
	}\hfil
	\caption
	{The equivalent models of fully SC-SIC and NOMA including virtual users. (a): User $3$ is cluster-head; (b): Users $3$, $4$ and $6$ are cluster-head; (c): All the users are cluster-head.}
	\label{Fig_KKTModel}
\end{figure*}
Note that FDMA is a special case of NOMA, where each subchannel is assigned to a single user. Hence, each OMA user acts as a NOMA cluster-head user, and subsequently, the virtual users are identical to the real OMA users, i.e., $\alpha_n=1$, $H_n=h^n_{\Phi_n}$, and $c_n=0$. As a result, each user in FDMA deserves additional power. It can be shown that the analysis for finding the optimal power allocation in NOMA and FDMA are quite similar, and the only differences are\footnote{Additional note: In \cite{rezvani2021optimal}, we showed that $c_n \approx 0$.} $\alpha_n$ and $c_n$. Finally, both of them can be solved by using the water-filling algorithm for any number of users and clusters. 

\section{Numerical Results}\label{Section simulation}
In this section, we apply the Monto-Carlo simulations over $20,000$ channel realizations per-scenario, to evaluate the performance of fully SC-SIC, NOMA with different $U^\text{max}$, and FDMA. The simulation settings are shown in Table \ref{Table parameters}.
\begin{table}[tp]
	\centering
	\caption{System Parameters}
	\begin{adjustbox}{width=\columnwidth,center}
		\begin{tabular}{c c}
			\hline
			\textbf{Parameter} & \textbf{Value} \\
			\hline
			BS maximum transmit power & 46 dBm \\
			Coverage of BS & Circular area with radii of $500$ m  \\
			Wireless Bandwidth & 5 MHz \\
			Number of users & $\{5,10,15,\dots,60\}$  \\
			User distribution model & Randomly (uniform distribution) \\
			$U^\text{max}$ in NOMA  & $\{2,4,6\}$ \\
			Minimum distance of users to BS & 20 m  \\
			Distance-depended path loss & $128.1 + 37.6 \log_{10} (d/\text{Km})$ dB  \\
			Lognormal shadowing standard deviation & $8$ dB  \\
			Small-scale fading & Rayleigh fading \\
			AWGN power density & -174 dBm/Hz \\
			Minimum rate demand of users  & $\{0.25,0.5,0.75,1,\dots,5\}$ Mbps \\
			\hline
		\end{tabular}
		\label{Table parameters}
	\end{adjustbox}
\end{table}
The optimal user grouping in NOMA is known to be NP-hard 
\cite{7263349,9154358,7982784,7557079}, thus exhasutive search is required to achieve the globally optimal user grouping. Due to the exponential complexity of exhaustive search in the number of users, similar to all the existing works, we apply a suboptimal user grouping, where each subchannel (with $N=\lceil K/U^\text{max}\rceil$) is assigned to the strongest candidate user. It is worth noting that the optimal power allocation provided in this work is valid for any given user grouping including arbitrary number of multiplexed users within each group, such as swap-matching based user grouping in \cite{7982784} (which is valid for only $2$-order NOMA clusters) or the heuristic strategies in \cite{7557079}. The term '$X$-NOMA' is referred to NOMA with $U^\text{max}=X$.

Fig. \ref{Figoutageminrate} evaluates the outage probability of fully SC-SIC, $X$-NOMA, and FDMA in various number of users and minimum rate demands. 
\begin{figure}
	\centering
	\subfigure[Outage probability vs. users $R^{\text{min}}_{k}$ for $K=30$.]{
		\includegraphics[scale=0.38]{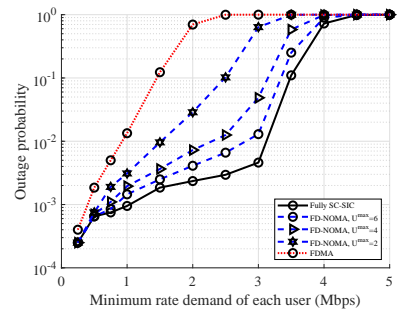}
		\label{Fig_outage_minrate2}
	}\hfil
%	\subfigure[Outage probability vs. users minimum rate demand for $K=50$.]{
%		\includegraphics[scale=0.28]{Fig_outage_minrate3.eps}
%		\label{Fig_outage_minrate3}
%	}\hfil
	\subfigure[Outage probability vs. number of users for $R^{\text{min}}_{k}=3~\text{Mbps},\forall k \in \mathcal{K}$.]{
		\includegraphics[scale=0.38]{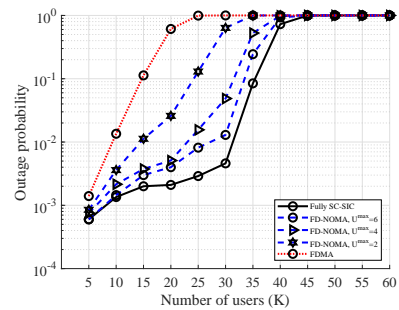}
		\label{Fig_outage_usernum2}
	}\hfil
%	\subfigure[Outage probability vs. number of users for $R^{\text{min}}_{k}=4.5~\text{Mbps},\forall k \in \mathcal{K}$.]{
%		\includegraphics[scale=0.28]{Fig_outage_usernum3.eps}
%		\label{Fig_outage_usernum3}
%	}\hfil
	\caption
	{Impact of the minimum rate demand and number of users on outage probability of fully SC-SIC, NOMA, and FDMA.}
	\label{Figoutageminrate}
\end{figure}
The outage probability of each multiple access technique is calculated by dividing the number of infeasible solutions determined according to Lemma \ref{lemma feasiblecluster}, instances by total number of channel realizations. The results show that for quite lower levels of $R^{\text{min}}_{k}$, the performance gap between different multiple access techniques is low. For moderate $R^{\text{min}}_{k}$, there exist significant performance gaps between FDMA and $2$-NOMA, and also between $2$-NOMA and $4$-NOMA. However, the performance gap between $4$-NOMA and $6$-NOMA is quite low. The results show that multiplexing more than $2$ users results significant performance gain, such that this performance gain highly decreases when $U^\text{max}$ increases. Finally, for quite large $R^{\text{min}}_{k}$, the outage probability of all the techniques tends to $1$. Similar arguments also hold for different number of users shown in Fig. \ref{Fig_outage_usernum2}. In summary, the outage probability follows: $\text{Fully SC-SIC} < 6\text{-NOMA} \approx 4\text{-NOMA} < 2\text{-NOMA} \ll \text{FDMA}$.

Fig. \ref{FigSRminrate} evaluates the average sum-rate for different minimum rate demands as well as number of users.
\begin{figure}
	\centering
	\subfigure[Average sum-rate vs. $R^{\text{min}}_{k}$ for $K=30$.]{
		\includegraphics[scale=0.38]{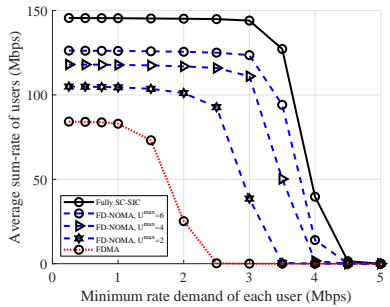}
		\label{Fig_SR_minrate2}
	}\hfil
%	\subfigure[Average sum-rate vs. users minimum rate demand for $K=50$.]{
%		\includegraphics[scale=0.28]{Fig_SR_minrate3.eps}
%		\label{Fig_SR_minrate3}
%	}\hfil
	\subfigure[Average sum-rate vs. number of users for $R^{\text{min}}_{k}=3~\text{Mbps},\forall k \in \mathcal{K}$.]{
		\includegraphics[scale=0.38]{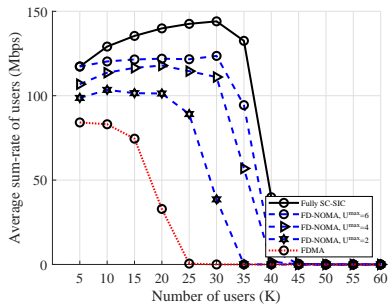}
		\label{Fig_SR_usernum2}
	}\hfil
%	\subfigure[Average sum-rate vs. number of users for $R^{\text{min}}_{k}=4.5~\text{Mbps},\forall k \in \mathcal{K}$.]{
%		\includegraphics[scale=0.28]{Fig_SR_usernum3.eps}
%		\label{Fig_SR_usernum3}
%	}\hfil
	\caption
	{Impact of the minimum rate demand and number of users on average sum-rate of fully SC-SIC, NOMA, and FDMA.}
	\label{FigSRminrate}
\end{figure}
For the case that outage occurs, the sum-rate is set to zero.
As can be seen, there is a significant performance gap between FDMA and NOMA or fully SC-SIC for low and moderate $R^{\text{min}}_{k}$ and/or number of users. In this range, increasing the number of users increases the performance gains between FDMA, $X$-NOMA ($2 \leq X \leq 6$), and fully SC-SIC as is shown in Fig. \ref{Fig_SR_usernum2}. Moreover, fully SC-SIC well exploits the multiuser diversity specifically when $R^{\text{min}}_{k}$ and number of users are low. However, the sum-rate of $X$-NOMA remains approximately constant, specifically for $X=4,6$. Finally, increasing the number of users in $2$-NOMA and FDMA reduces sum-rate. Similar to outage probability in Fig. \ref{Figoutageminrate}, we observe that the performance gain in terms of sum-rate between $4$-NOMA and $6$-NOMA is quite low, and is decreasing between $X$-NOMA and $(X+1)$-NOMA, when $X \to K$. In summary, the sum-rate follows: $\text{Fully SC-SIC} > 6\text{-NOMA} \approx 4\text{-NOMA} > 2\text{-NOMA} \gg \text{FDMA}$.

\section{Concluding Remarks and Future Works}\label{Section conclusion}
In this paper, we proposed a fast water-filling algorithm to find the globally optimal power allocation of a general multiuser downlink NOMA system. In this algorithm, we utilized the formulated closed-form of optimal powers among multiplxed users within each group. We showed that network-NOMA with $N$ clusters can be equivalently transformed to a virtual network-OMA including $N$ virtual users, whose channel gains and minimum required powers are obtained in closed form. Numerical assessments show that there exist a considerable performance gap in terms of both the outage probability and sum-rate between FDMA and $2$-NOMA as well as between $2$-NOMA and $4$-NOMA. These performance gaps highly decrease when $U^\text{max}>4$.
A list of future works are provided as follows:
\begin{enumerate}
	\item Hybrid-NOMA: For the case that each user occupies more than one subcarrier, further efforts are needed to tackle the nonconvex per-user minimum rate constraint.
	\item Maximizing Energy-Efficiency: Our analysis can be applied to the fractional energy-efficiency maximization problem.
	\item Behind-NOMA: Although NOMA provides a feasible solution for the case that the number of multiplexed-users is limited, the user gouping of NOMA (based on OMA) is still suboptimal. In information theory, it is proved that fully SC-SIC is capacity achieving in SISO Gaussian BCs when $U^\text{max} \geq K$. However, \textit{"What is the capacity-achieving technique when $U^\text{max} < K$?"}
\end{enumerate}

\appendices

\bibliographystyle{IEEEtran}
\bibliography{IEEEabrv,Bibliography}

\end{document}